\newtheorem{theorem}{Theorem}
\newtheorem{lemma}{Lemma}
\newtheorem{rem}{Remark}
\begin{document}
\title{A geometric approach to Wigner-type theorems}
\author{Mark Pankov, Thomas Vetterlein}
\subjclass[2000]{47N50, 81P10, 81R15}

\keywords{Wigner symmetry, orthogonality preserving transformation, Hilbert Grassmannian}
\address{Mark Pankov: Faculty of Mathematics and Computer Science, 
University of Warmia and Mazury, S{\l}oneczna 54, 10-710 Olsztyn, Poland}
\email{pankov@matman.uwm.edu.pl}

\address{Thomas Vetterlein: Department of Knowledge-Based Mathematical Systems,
Johannes Kepler University Linz,
Altenberger Stra\ss{}e 69, 4040 Linz, Austria}
\email{Thomas.Vetterlein@jku.at}

\begin{abstract}
Let $H$ be a complex Hilbert space and let ${\mathcal P}(H)$ be the associated projective space (the set of rank-one projections). Suppose that $\dim H\ge 3$. 
We prove the following Wigner-type theorem: if $H$ is finite-dimensional, then 
every orthogonality preserving transformation of ${\mathcal P}(H)$ is induced by 
a unitary or anti-unitary operator. This statement will be obtained as a consequence of the following result: every orthogonality preserving lineation of ${\mathcal P}(H)$ to itself is induced by a linear or conjugate-linear isometry ($H$ is not assumed to be finite-dimensional). As an application, we describe (not necessarily  injective) transformations of Grassmannians preserving some types of principal angles. 
\end{abstract}

\maketitle

\section{Introduction}
It follows from Gleason's theorem \cite{Gleason} that pure states of a quantum mechanical system are precisely the rank-one projections. Identifying every projection with its image, we arrive at the projective space ${\mathcal P}(H)$ formed by $1$-dimensional subspaces of a complex Hilbert space $H$. 
Our concern is the discussion of symmetries of quantum mechanical systems. The classical version of Wigner's theorem \cite{Wigner} characterizes  them as follows: every bijective transformation of the set of pure states preserving the transition probability (the trace of the composition of two projections or, equivalently, the angle between the corresponding rays) is induced by a unitary or anti-unitary operator. Transformations of this kind are known as {\it Wigner symmetries}. More generally, an arbitrary transformation of ${\mathcal P}(H)$ preserving the angle between any pair of $1$-dimensional subspaces is induced by a linear or conjugate-linear isometry (see, for example, \cite{Geher1, Molnar-book,Pankov-book}). There is no additional assumption, but the condition on the preservation of the angles immediately implies that such a transformation is injective.
Some significant generalizations of the bijective version of Wigner's theorem are obtained in \cite{Geher3,Molnar1}.

On the other hand, Uhlhorn \cite{Uhlhorn} provided a geometric approach to Wigner's theorem
based on the Fundamental Theorem of Projective Geometry:
if the dimension of $H$ is not less than $3$, then
every bijective transformation of ${\mathcal P}(H)$ preserving the orthogonality relation in both directions is a Wigner symmetry. The proof  of this statement can be sketched as follows. 
Every bijection preserving the orthogonality relation in both directions  is a collineation of the projective space (i.e. preserves the family of lines in both directions); consequently, by the Fundamental Theorem of Projective Geometry, it is induced by a semilinear automorphism of $H$; finally, every semilinear automorphism of $H$ sending orthogonal vectors to orthogonal vectors is a scalar multiple of a unitary or anti-unitary operator.  

If $H$ is infinite-dimensional, then there is an injective transformation of ${\mathcal P}(H)$ that preserves the orthogonality relation  in both directions and that is not induced by a linear or conjugate-linear isometry \cite{Semrl}. Hence the bijectivity assumption  cannot be omitted in Uhlhorn's version of Wigner's theorem.  In the present paper, we prove the following Wigner-type theorem (Theorem \ref{theorem-wig}): if the dimension of $H$ is finite and not less than $3$, then an arbitrary orthogonality preserving transformation of ${\mathcal P}(H)$ (which sends orthogonal rays to orthogonal rays without the assumption that the orthogonality relation is preserved in both directions) is a Wigner symmetry.

Our basic observation is the following. 
If $H$ is finite-dimensional, then every orthogonality preserving transformation of ${\mathcal P}(H)$ is a {\it lineation} which means that it sends every line to a subset of a line. In general, the behavior of lineations between projective spaces is complicated; they are not injective and can send lines to parts of lines only. Our version of Wigner's theorem is a consequence of the following result (Theorem \ref{theorem-lin}): every orthogonality preserving lineation of ${\mathcal P}(H)$ to itself is induced by a linear or conjugate-linear isometry; as above, we assume that the dimension of $H$ is not less than $3$, but do not require that $H$ is finite-dimensional.  
We note that orthogonality preserving lineations between the projective spaces associated to anisotropic Hermitian spaces are investigated in \cite{PV}.

The proof of Theorem \ref{theorem-lin} will be given in two steps. Using Gleason's theorem, we establish that every orthogonality preserving lineation is {\it non-degenerate};  in particular, the image of every line contains at lest three elements. This guarantees that such a lineation is induced by a generalized semilinear map (a modification of the Fundamental Theorem of Projective Geometry \cite{Faure,Mach,Rado}). Our next step is to show that orthogonality preserving generalized semilinear maps are precisely linear and conjugate-linear isometries, which is equivalent to the fact that every place of the complex field ${\mathbb C}$ (a homomorphism of a valuation ring of ${\mathbb C}$ to ${\mathbb C}$) is the identity or the complex conjugation. 

The conjugacy class of rank-$k$ projections can be naturally identified with the Grassmannian consisting of $k$-dimensional subspaces of $H$. Wigner's theorem was extended on such Grassmannias by Moln\'ar \cite{Molnar2, Molnar3} (see \cite{Geher2,Gyory,Semrl} for closely connected results). The transformation of Grassmannians induced by linear and conju\-gate-linear isometries are characterized as transformations preserving principal angles between subspaces.  
By \cite{Pank-1}, it is sufficient to require that only some of principal angles (related to adjacency and orthogonality) are preserved. Using Theorem \ref{theorem-lin}, we prove a non-injective version of this result (Theorem \ref{theorem-wig-gr}).

\section{Results}
Throughout the paper we assume that
$H$ is a complex Hilbert space of dimension not less than $3$ (possibly $H$ is infinite-dimensional)
and denote by ${\mathcal P}(H)$ the associated projective space, 
i.e. the set of all $1$-dimensional subspaces of $H$.
The following statement describes orthogonality preserving transformations of ${\mathcal P}(H)$,
i.e. such that for any two orthogonal $1$-dimensional subspaces of $H$ the images are orthogonal.

\begin{theorem}\label{theorem-wig}
If $H$ is finite-dimensional, then any orthogonality preserving transformation of ${\mathcal P}(H)$
is a Wigner symmetry, i.e. a bijection induced by a unitary or anti-unitary operator on $H$.
\end{theorem}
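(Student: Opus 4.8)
The plan is to derive Theorem~\ref{theorem-wig} as a corollary of Theorem~\ref{theorem-lin}, whose statement we are permitted to use. The overall strategy has two conceptual parts: first, to verify that when $H$ is finite-dimensional, every orthogonality preserving transformation of ${\mathcal P}(H)$ is automatically a lineation, so that Theorem~\ref{theorem-lin} applies and yields a linear or conjugate-linear isometry inducing the map; second, to upgrade this isometry to a \emph{surjective} one, i.e. a unitary or anti-unitary operator, using finite-dimensionality.

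First I would establish the lineation property. Recall that a line in ${\mathcal P}(H)$ is the set of $1$-dimensional subspaces contained in a fixed $2$-dimensional subspace $V$. In finite dimension $n=\dim H$, such a $V$ is the intersection of $n-2$ orthogonal hyperplanes, equivalently $V=S^{\perp\perp}$ where $S$ is a set of $n-2$ pairwise orthogonal $1$-dimensional subspaces orthogonal to $V$. An orthogonality preserving transformation $f$ sends any family of pairwise orthogonal rays to a family of pairwise orthogonal rays; the key observation is that a ray $P$ lies on the line determined by $V$ precisely when $P$ is orthogonal to every member of the orthogonal complement $S$, and this orthogonality is preserved by $f$. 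Thus the image of a line is contained in the set of rays orthogonal to $f(S)$, which is again a line (here finite-dimensionality is used to guarantee that the orthogonal complement of an $(n-2)$-dimensional span is exactly $2$-dimensional). Hence $f$ is a lineation, and Theorem~\ref{theorem-lin} provides a linear or conjugate-linear isometry $U$ with $f([x])=[Ux]$ for all nonzero $x\in H$.

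Next I would pass from an isometry to a unitary or anti-unitary operator. An isometry $U$ on a finite-dimensional space is automatically injective, and in finite dimension an injective linear (or conjugate-linear) endomorphism is surjective by the rank--nullity theorem, hence bijective; a bijective isometry is precisely a unitary (in the linear case) or anti-unitary (in the conjugate-linear case) operator. This forces $f$ itself to be a bijection of ${\mathcal P}(H)$, so $f$ is a Wigner symmetry as claimed.

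The main obstacle I anticipate is the lineation step, specifically the clean identification of a line with the orthogonal complement of an orthogonal family of the right cardinality; one must be careful that $f$ need only preserve orthogonality in one direction, so the argument must show the image of a line is \emph{contained} in a line (which is all a lineation requires) rather than equal to it. The passage to surjectivity, by contrast, is purely a dimension count and presents no real difficulty once finite-dimensionality is invoked. I would therefore concentrate the detailed argument on confirming that the orthogonal-complement description of lines is correctly preserved under a one-directional orthogonality preserving map.
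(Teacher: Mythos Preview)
Your proposal is correct and follows essentially the same route as the paper: the lineation step is exactly Lemma~\ref{lemma-1} (choose $n-2$ mutually orthogonal rays spanning $V^{\perp}$, use one-directional orthogonality preservation to show the image of the line lies in the $2$-dimensional orthogonal complement of their images), and the upgrade from isometry to unitary/anti-unitary via rank--nullity is precisely what the paper invokes when it notes that in finite dimensions linear or conjugate-linear isometries are unitary or anti-unitary. Your anticipated obstacle is a non-issue for the reason you already identify, so no further work is needed.
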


Recall that a line in the projective space ${\mathcal P}(V)$ associated to a vector space $V$
consists of all $1$-dimensional subspaces of a certain $2$-dimensional subspace,
i.e. it is a set of type ${\mathcal P}(S)$, where $S$ is a $2$-dimensional subspace of $V$.
A {\it lineation} is a map between projective spaces which sends lines to subsets of lines.

Theorem \ref{theorem-wig} is an immediate consequence of Theorem \ref{theorem-lin},
which will be presented below, and the following lemma.

\begin{lemma}\label{lemma-1}
If $H$ is finite-dimensional, then every orthogonality preserving transformation of ${\mathcal P}(H)$
is a lineation.
\end{lemma}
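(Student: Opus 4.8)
The plan is to describe a line by orthogonality conditions involving its orthogonal complement, and then transport these conditions through the transformation. Write $n = \dim H$ and fix a line $L = \mathcal{P}(S)$, where $S$ is a $2$-dimensional subspace of $H$. Since $H$ is finite-dimensional, the orthogonal complement $S^\perp$ has dimension $n - 2 \ge 1$, and a $1$-dimensional subspace $P$ is contained in $S = (S^\perp)^\perp$ precisely when $P$ is orthogonal to every $1$-dimensional subspace contained in $S^\perp$. This is the reformulation of ``lying on $L$'' that I want to push forward.

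First I would fix an orthonormal basis $e_3, \dots, e_n$ of $S^\perp$ and denote by $P_i = \langle e_i \rangle$ the corresponding pairwise orthogonal points, each of which is orthogonal to every point of $L$. Let $f$ be the given orthogonality preserving transformation. Since $f$ sends orthogonal pairs to orthogonal pairs, the images $f(P_3), \dots, f(P_n)$ are again pairwise orthogonal; being pairwise orthogonal nonzero $1$-dimensional subspaces, they are linearly independent and span a subspace $T$ of dimension exactly $n - 2$.

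Next I would run the orthogonality condition for an arbitrary point $P \in L$. As $P \subseteq S$ and each $P_i \subseteq S^\perp$, we have $P \perp P_i$, hence $f(P) \perp f(P_i)$ for every $i$. Consequently $f(P)$ is orthogonal to the whole span $T$, that is, $f(P) \subseteq T^\perp$. In a finite-dimensional space $\dim T + \dim T^\perp = n$, so $T^\perp$ is $2$-dimensional and $\mathcal{P}(T^\perp)$ is a line. Since this holds for every $P \in L$, we conclude $f(L) \subseteq \mathcal{P}(T^\perp)$, which is exactly the assertion that $f$ is a lineation.

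The only delicate point is the dimension bookkeeping, and this is precisely where finite-dimensionality is indispensable: the identity $\dim T + \dim T^\perp = n$ guarantees that the orthogonal complement of the $(n-2)$-dimensional image span is genuinely a $2$-dimensional subspace rather than something larger. In the infinite-dimensional setting $S^\perp$ is infinite-dimensional, the images of an orthonormal system need not span a subspace with a $2$-dimensional orthogonal complement, and the argument collapses---consistent with the fact that Lemma \ref{lemma-1} and Theorem \ref{theorem-wig} are stated only for finite-dimensional $H$.
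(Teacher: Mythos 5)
Your argument is correct and is essentially identical to the paper's proof: both decompose $S^\perp$ into $n-2$ mutually orthogonal lines, note that their images are mutually orthogonal and hence span an $(n-2)$-dimensional subspace whose orthogonal complement $S'$ is $2$-dimensional, and conclude that $f(\mathcal{P}(S))\subset\mathcal{P}(S')$. No differences worth noting.
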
 

\begin{proof}
Suppose that $\dim H=n$ is finite.
Let $f$ be an orthogonality preserving transformation of ${\mathcal P}(H)$
and let $S$ be a $2$-dimensional subspace of $H$. 
We choose mutually orthogonal $1$-dimensional subspaces $P_1,\dots, P_{n-2}$
whose sum coincides with $S^{\perp}$. 
The $1$-dimensional subspaces $f(P_1),\dots, f(P_{n-2})$ are mutually orthogonal
and the orthogonal complement of their sum is a $2$-dimensional subspace $S'$. 
Every $1$-dimensional subspace $P\subset S$ is orthogonal to each $P_i$
and, consequently,  $f(P)$ is orthogonal to every $f(P_{i})$ which implies that 
$f(P)\subset S'$. 
So, $f({\mathcal P}(S))$ is contained in ${\mathcal P}(S')$.
\end{proof}

\begin{rem}{\rm
If $H$ is infinite-dimensional, then the above statement holds only in the case when
a transformation of ${\mathcal P}(H)$
sends maximal collections of mutually orthogonal $1$-dimensional subspaces 
to  maximal collections of mutually orthogonal $1$-dimensional subspaces. 
}\end{rem}

Let $V$ be a vector space over a field $F$ and $\dim V\ge 3$.
A map $L:V\to V$ is {\it semilinear} if 
$$L(x+y)=L(x)+L(y)$$
for all $x,y\in V$ and there is a non-zero homomorphism $\sigma:F\to F$ 
such that 
$$L(ax)=\sigma(a)L(x)$$
for all $a\in F$ and $x\in V$. 
Every semilinear injection $L:V\to V$ induces 
a lineation of the projective space ${\mathcal P}(V)$ to itself,
which sends a $1$-dimensional subspace $P\subset V$
to the $1$-dimensional subspace containing $L(P)$.
This lineation is not necessarily injective (see, for example, \cite[Section 2.1]{Pankov-book}).
The following version of the Fundamental Theorem of Projective Geometry is well-known: 
every injective lineation of ${\mathcal P}(V)$ to itself 
whose image is not contained in a line is induced by a semilinear injective transformation of $V$
\cite{FF, Havlicek} (see also \cite{Pankov-book}).

If an injective semilinear  transformation of $H$ is orthogonality preserving, 
then it is a scalar multiple of a linear or conjugate-linear isometry
(see, for example, \cite[Proposition 4.2]{Pankov-book}).
In the case when $H$ is finite-dimensional, such isometries are precisely unitary and anti-unitary operators.

\begin{theorem}\label{theorem-lin}
Every orthogonality preserving lineation $f:{\mathcal P}(H)\to {\mathcal P}(H)$ 
is induced by a linear or conjugate-linear isometry of $H$ to itself.
\end{theorem}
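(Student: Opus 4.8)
The plan is to follow the two-step strategy outlined in the introduction, using the structural results quoted just before the statement. The first and conceptually crucial step is to upgrade an arbitrary orthogonality preserving lineation $f$ to a \emph{non-degenerate} one, meaning that the image of $f$ is not contained in a line and, more precisely, that $f$ sends lines to sets containing at least three points so that the quoted version of the Fundamental Theorem of Projective Geometry applies. Here I would invoke Gleason's theorem. The idea is that if $f$ collapsed too much—say, if it sent an entire line into a single point or into a two-point set—then one could build a finitely additive measure on the projections that violates the Gleason representation, or more directly, one could exhibit an orthonormal system whose images fail to span enough of the space. Concretely, I would fix an orthonormal basis (or a maximal orthogonal family) and track the images of its members under $f$; orthogonality preservation forces these images to be mutually orthogonal, and a counting/dimension argument combined with the fact that every $1$-dimensional subspace lies on suitable orthogonal configurations should rule out degeneracy. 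The output is that $f$ is an injective (or at least suitably non-degenerate) lineation whose image is not contained in a line.

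Granting non-degeneracy, the second step is immediate in structure: by the quoted modification of the Fundamental Theorem of Projective Geometry (\cite{Faure,Mach,Rado}, or for the injective case \cite{FF,Havlicek}), $f$ is induced by a generalized semilinear map, i.e. a semilinear map $L:H\to H$ whose accompanying field homomorphism is replaced, in the generalized setting, by a \emph{place} of $\mathbb{C}$—a homomorphism $\phi$ from a valuation ring $\mathcal{O}\subseteq\mathbb{C}$ into $\mathbb{C}$. At this point the problem is reduced to pure field theory: I must show that every place of $\mathbb{C}$ is either the identity or complex conjugation. This is the statement flagged in the introduction as equivalent to the theorem, and I expect it to be the main obstacle, since general field endomorphisms of $\mathbb{C}$ are wild (there are discontinuous automorphisms, non-surjective embeddings, etc.), and only the extra rigidity coming from the orthogonality (Hermitian) structure cuts them down to the two tame cases.

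The way I would handle this obstacle is to feed the orthogonality-preserving condition back into the algebra. A generalized semilinear $L$ inducing $f$ must send orthogonal vectors to orthogonal vectors, which translates into a compatibility between the place $\phi$ and the Hermitian form: writing the form as $\langle x,y\rangle$, the condition $\langle x,y\rangle=0 \Rightarrow \langle Lx,Ly\rangle=0$ forces $\phi$ to respect enough of the arithmetic of $\mathbb{C}$ that it cannot kill or distort the form. The key leverage is that $\phi$ must be defined on (and send to $\mathbb{C}$) all the ratios of inner products that arise from orthogonality relations among three or more mutually nonorthogonal rays; since $\mathbb{C}$ is algebraically closed with the reals as a real-closed subfield, a place of $\mathbb{C}$ that is nontrivial would have to collapse the valuation on $\mathbb{R}$, and the orthogonality constraints prevent this. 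Tracking through these relations should show $\phi$ is an honest field embedding $\mathbb{C}\hookrightarrow\mathbb{C}$ fixing the quantities forced by the form, and then a standard argument (using that $\phi$ fixes the rationals and is compatible with the ordering on $\mathbb{R}$ coming from sums of squares of norms) pins $\phi$ down to $\mathrm{id}$ or conjugation.

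Once $\phi$ is identified, the accompanying map $L$ is genuinely linear or conjugate-linear, and orthogonality preservation together with the quoted fact (a semilinear orthogonality-preserving transformation of $H$ is a scalar multiple of a linear or conjugate-linear isometry, \cite[Proposition 4.2]{Pankov-book}) finishes the argument: rescaling $L$ yields an honest isometry of the stated type inducing $f$. I would be careful to run the whole argument without assuming $\dim H<\infty$, so the first (Gleason / non-degeneracy) step must be phrased in terms of maximal orthogonal families rather than bases, and the field-theoretic step, being intrinsic to $\mathbb{C}$, is dimension-free and carries over verbatim.
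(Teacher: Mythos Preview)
Your two-step outline matches the paper's strategy exactly: non-degeneracy via Gleason, then the generalized Fundamental Theorem to obtain a generalized semilinear map over a place, then identification of the place, then the scalar-multiple-of-isometry fact. Two places where your sketch is loose enough to lead you astray in execution:

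For non-degeneracy (the paper's Lemma~\ref{lemma-nondeg}), the orthonormal-basis-tracking idea you label ``concrete'' does not do the job: knowing that the images of a maximal orthogonal family are mutually orthogonal tells you nothing about whether a particular line collapses to two points. The paper proceeds along your \emph{first} suggestion instead---assume some line maps to a two-element set $\{P',Q'\}$, restrict attention to a $3$-dimensional subspace $M$ containing that line, and manufacture a two-valued measure on ${\mathcal P}(M)$, contradicting Gleason. The reduction is not automatic: there is a case split according to whether the two other coordinate lines in $M$ also collapse, and the second case is handled by post-composing $f|_{{\mathcal P}(M)}$ with an auxiliary orthogonality-preserving map that forces the collapse, reducing to the first case.

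For the place, your sentence ``I must show that every place of ${\mathbb C}$ is either the identity or complex conjugation'' is false as a statement of pure field theory: ${\mathbb C}$ carries nontrivial valuations whose residue field re-embeds into ${\mathbb C}$, so nontrivial places exist. The paper does not attempt this. Instead (Lemma~\ref{lemma-cp1}) it uses concrete orthogonal pairs such as $ax-y\perp x+\overline{a}\,y$ (with $x\perp y$, $\|x\|=\|y\|$) to prove that the \emph{particular} place $\sigma$ coming from an orthogonality-preserving generalized semilinear map commutes with complex conjugation, $\sigma(\overline{a})=\overline{\sigma(a)}$. Only after this does a genuinely field-theoretic step (Lemmas~\ref{lemma-rp} and~\ref{lemma-cp2}: every place of ${\mathbb R}$ is the identity, via $\sigma(a)=\sigma(\sqrt{a})^{2}\ge 0$ and the archimedean order) force $\sigma$ to be the identity or conjugation. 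Your instinct to ``feed the orthogonality-preserving condition back into the algebra'' is exactly right, but the conjugation-commuting identity is the precise hinge, and it comes from those explicit orthogonal pairs rather than from a general ``ratios of inner products'' heuristic.
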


In Theorem \ref{theorem-lin}, we do not assume that $H$ is finite-dimensional. 
Theorem \ref{theorem-wig} is a direct consequence of Theorem \ref{theorem-lin} and Lemma \ref{lemma-1}.

\section{Proof of Theorem  \ref{theorem-lin}}

A lineation is said to be {\it non-degenerate} if the following conditions are satisfied:
\begin{enumerate}
\item[(L1)] the image of the lineation is not contained in a line;
\item[(L2)] the image of every line contains at least three points.
\end{enumerate}

\begin{lemma}\label{lemma-nondeg}
Every orthogonality preserving lineation $f:{\mathcal P}(H)\to {\mathcal P}(H)$  is non-degenerate.
\end{lemma}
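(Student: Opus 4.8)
The plan is to check the two non-degeneracy conditions separately, the first being immediate and the second resting on Gleason's theorem applied inside a suitable $3$-dimensional subspace. Condition (L1) should follow at once from orthogonality preservation: since $\dim H\ge 3$ we may choose three mutually orthogonal $1$-dimensional subspaces $P_1,P_2,P_3$, and their images $f(P_1),f(P_2),f(P_3)$ are again mutually orthogonal. A line is of the form $\mathcal P(S')$ with $\dim S'=2$, and a $2$-dimensional space contains at most two mutually orthogonal rays; hence the image of $f$ cannot be contained in a line.

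For (L2) I would argue by contradiction. Fix a line $\mathcal P(S)$ with $\dim S=2$ and suppose its image has at most two points. Orthogonal rays have orthogonal, hence distinct, images, so $f$ is injective on orthogonal pairs; as $\mathcal P(S)$ contains such pairs, the image must consist of exactly two mutually orthogonal points $Q_1\perp Q_2$ (this also rules out a single point). Write $A,B\subset\mathcal P(S)$ for the $f$-preimages of $Q_1,Q_2$; both are non-empty and orthocomplementation within $\mathcal P(S)$ interchanges them. Now pick $e_3$ spanning a $1$-dimensional subspace of $S^{\perp}$ and set $W=S\oplus\langle e_3\rangle$, a $3$-dimensional subspace, with $Q_3=f(\langle e_3\rangle)$ orthogonal to $Q_1,Q_2$. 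The first key step is a planar incidence argument: the points $\langle e_1\rangle,\langle e_2\rangle,\langle e_3\rangle$ of the projective plane $\mathcal P(W)$ are non-collinear and their images $Q_1,Q_2,Q_3$ are mutually orthogonal, hence non-collinear and spanning a $3$-dimensional subspace $U_0=Q_1\oplus Q_2\oplus Q_3$. Since $f$ is a lineation and any two lines of the projective plane $\mathcal P(W)$ meet, a routine point-chase (express an arbitrary point of $\mathcal P(W)$ as the meet of two lines joining it to the reference triple) gives $f(\mathcal P(W))\subseteq\mathcal P(U_0)$.

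This containment is what makes a frame function available in arbitrary dimension. For a fixed unit vector $q_1\in Q_1$, define $g\colon\mathcal P(W)\to\mathbb R$ by $g(P)=|\langle q_1,x\rangle|^2$, where $x$ is a unit vector spanning $f(P)$. For any orthonormal basis $v_1,v_2,v_3$ of $W$, the images $f(\langle v_i\rangle)$ are three mutually orthogonal rays inside the $3$-dimensional $U_0$, so the associated unit vectors form an orthonormal basis of $U_0$; since $q_1\in U_0$, Parseval's identity yields $g(\langle v_1\rangle)+g(\langle v_2\rangle)+g(\langle v_3\rangle)=1$. Thus $g$ is a frame function on $\mathcal P(W)$, and by Gleason's theorem \cite{Gleason} (here $\dim W=3$) it is represented by a bounded positive operator, hence is continuous on $\mathcal P(W)$. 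The contradiction then comes from the collapsed line: for $P\in\mathcal P(S)$ we have $f(P)\in\{Q_1,Q_2\}$, so $g$ equals $1$ on $A$ and $0$ on $B$, and since both sets are non-empty, $g$ attains both values $0$ and $1$ on $\mathcal P(S)$. But $\mathcal P(S)$ is connected and $g$ is continuous, so $g|_{\mathcal P(S)}$ cannot be $\{0,1\}$-valued and non-constant. This contradiction establishes (L2).

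The step I expect to be the main obstacle is securing the frame-function property in the infinite-dimensional setting. The naive global definition $P\mapsto|\langle q_1,x\rangle|^2$ on all of $\mathcal P(H)$ fails to have constant sum over orthonormal bases, because the image of an orthonormal basis need not be complete, so Gleason's theorem does not apply to it directly. Localizing to the $3$-dimensional $W$, and first proving via the planar incidence argument that $f(\mathcal P(W))$ lands in a fixed $3$-dimensional $U_0$, is precisely what repairs this: inside $U_0$ the image triples are automatically complete. I would therefore spend most of the effort making the incidence and point-chase rigorous, taking care of the degenerate configurations, since everything else reduces to Bessel/Parseval bookkeeping and a connectedness argument.
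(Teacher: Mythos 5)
Your proof is correct, and while it shares the paper's overall strategy for (L2) -- localize to a $3$-dimensional subspace $W=S\oplus\langle e_3\rangle$ and invoke Gleason's theorem -- the way you extract the contradiction is genuinely different. The paper first pins down the image of $\mathcal P(W)$ completely: it shows $f(\mathcal P(W))$ lies in the union of two lines, then splits into two cases according to whether the auxiliary lines $\mathcal P(P+T)$, $\mathcal P(Q+T)$ also have two-point images, using a collapsing map $g$ to reduce the second case to the first, and finally concludes that $f(\mathcal P(W))=\{P',Q',T'\}$, which yields a two-valued measure and contradicts the ``no dispersion-free states'' corollary of Gleason. You bypass the entire case analysis: the weaker containment $f(\mathcal P(W))\subseteq\mathcal P(U_0)$ (which your incidence argument does establish -- any point of $\mathcal P(W)\setminus\mathcal P(S)$ lies on the line joining it to $\langle e_3\rangle$, which meets $\mathcal P(S)$, so its image lies on $\mathcal P(Q_1+Q_3)\cup\mathcal P(Q_2+Q_3)$) already guarantees that $P\mapsto|\langle q_1,x_{f(P)}\rangle|^2$ is a nonnegative weight-one frame function on the $3$-dimensional $W$, and the full regularity conclusion of Gleason gives continuity, which is incompatible with a non-constant $\{0,1\}$-valued restriction to the connected set $\mathcal P(S)\cong\mathbb{CP}^1$. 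What your route buys is economy -- no case distinction and no need to determine $f(\mathcal P(W))$ exactly; what it costs is reliance on the quantitative form of Gleason's theorem (representation by a positive operator) rather than merely the nonexistence of two-valued measures. Your diagnosis of the infinite-dimensional pitfall (the image of an orthonormal basis of $H$ need not be complete, so one must localize before defining the frame function) is exactly right and is implicitly the reason the paper also works inside the $3$-dimensional $M=P+Q+T$.
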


\begin{proof}
The condition (L1) follows immediately from the fact that $f$ is orthogonality preserving.
Since each line of ${\mathcal P}(H)$ contains a pair 
of orthogonal $1$-dimensional subspaces, the image of every line contains at least two orthogonal elements.

Suppose that ${\mathcal P}(S)$ is a line whose image consists of two elements.
Let $P$ and $Q$ be orthogonal $1$-dimensional subspaces of $S$.
The image of every $1$-dimensional subspace of $S$ coincides with $P'=f(P)$ or $Q'=f(Q)$.
We take any $1$-dimensional subspace $T\subset H$ orthogonal to both $P,Q$ and 
consider the restriction of $f$ to ${\mathcal P}(M)$, where $M=P+Q+T$.
The $1$-dimensional subspaces $P',Q'$ and $T'=f(T)$ are mutually orthogonal.

Every $1$-dimensional subspace of $M$ is contained in a $2$-dimensional subspace $T+N$,
where $N$ is a $1$-dimensional subspace of $S$.
Since $f$ is a lineation and $f({\mathcal P}(S))=\{P',Q'\}$, 
the image of ${\mathcal P}(M)$ is contained in the union of 
the lines $${\mathcal P}(P'+T')\;\mbox{ and }\;{\mathcal P}(Q'+T').$$
One of the following possibilities is realized:
\begin{enumerate}
\item[(1)] the image of every $1$-dimensional subspace of $P+T$ is $P'$ or $T'$
and the image of every $1$-dimensional subspace of $Q+T$ is $Q'$ or $T'$;
\item[(2)] for one of the lines ${\mathcal P}(P+T)$ or ${\mathcal P}(Q+T)$,
say ${\mathcal P}(P+T)$, the image contains more than two elements.
\end{enumerate}

Case (1). 
For every $1$-dimensional subspace $N\subset M\setminus S$ we take
the $1$-dimensional subspaces $N_P \subset P+T$ and $N_Q \subset Q+T$ such that 
$N$ is the intersection of the $2$-dimensional subspaces $P+N_Q$ and $Q+N_P$.
By our assumption, 
$$f(N_P)\in \{P',T'\}\;\mbox{ and }\;f(N_Q)\in \{Q',T'\}.$$
Using the fact that $f$ is a lineation, we establish that 
$$f({\mathcal P}(M))=\{P',Q',T'\}.$$
This means that $f$ induces a two-valued measure on ${\mathcal P}(M)$.
Indeed, we assign $1$ to one of $P',Q',T'$ and $0$ to the remaining two
and observe that for any three mutually orthogonal $1$-dimensional subspaces of $M$
only one of these subspaces corresponds to $1$.
Gleason's theorem provides a descriptions of all measures on 
the projective spaces associated to complex Hilbert spaces, in particular,
it shows that there is no two-valued measure, i.e. the case (1) is not realized.

Case (2). By assumption, there is a $1$-dimensional subspace 
$N_0 \subset P+T$ such that $f(N_0)\ne P', T'$.
For every $1$-dimensional subspace $N\subset Q+T$ 
the lineation $f$ transfers the intersection of the lines ${\mathcal P}(N_{0}+N)$
and ${\mathcal P}(S)$ to $P'$ or $Q'$. Since $f(N_0)\ne P', T'$, this implies 
that $f(N)$ is $Q'$ or $T'$. So, the image of the line ${\mathcal P}(Q+T)$ is $\{Q',T'\}$.
We have established that 
$$f({\mathcal P}(M))\subset{\mathcal P}(P'+T')\cup \{Q'\}.$$
Now, we present the line ${\mathcal P}(P'+T')$
as the disjoint union of two subsets ${\mathcal X}_1$ and ${\mathcal X}_2$
such that any two elements from each ${\mathcal X}_{i}$, $i\in \{1,2\}$ are non-orthogonal.
Let $g$ be the transformation of ${\mathcal P}(P'+Q'+T')$
defined as follows: it sends the elements of ${\mathcal X}_1$ and ${\mathcal X}_2$
to $P'$ and $T'$ (respectively) and the remaining elements go to $Q'$.
The composition $gf|_{{\mathcal P}(M)}$ is an orthogonality preserving 
transformation whose image is formed by $P',Q',T'$.
So, the case (2) is reduced to the case (1) and, consequently,   it also is  impossible. 
\end{proof}

Now, we describe the concepts of place and generalized semilinear map.  
For a field $F$ the additive and multiplicative operations 
can be extended to a partial operation on $F\cup\{\infty\}$ as follows:
$a+\infty=\infty$ for every $a\in F$ and $a\cdot \infty=\infty$ for every non-zero $a\in F$
(note that $\infty +\infty$ and $0\cdot \infty$ are undefined).
Similarly, for a vector space $V$ over $F$ we extend the additive operation and the scalar multiplication 
in the following way: $x+\infty=\infty$ for every vector $x\in V$ and 
$a\cdot\infty=\infty\cdot x=\infty$ for every scalar $a\in (F\setminus\{0\})\cup \{\infty\}$
and every vector $x\in (V\setminus\{0\})\cup \{\infty\}$
(again, $\infty +\infty$ and $0\cdot \infty$ are undefined). 

A {\it place} of $F$ is a map 
$$\sigma: F\to F\cup\{\infty\}$$
such that $\sigma(1)=1$ and
$$\sigma(a+b)=\sigma(a)+\sigma(b),\;\;\;\sigma(ab)=\sigma(a)\sigma(b)$$ 
provided the second sum and the second product are defined.
Then $R=\sigma^{-1}(F)$ is a {\it valuation ring} of $F$, which means that
for every non-zero $a\in F$ we have $a\in R$ or $a^{-1}\in R$.
Also, the ideal 
$$I_{R}=\{a\in R: a=0\;\mbox{or}\;a^{-1}\not\in R\}$$
is the kernel of $\sigma$.

Let $V$ be a vector space over $F$ and $\dim V\ge3$.
A map 
$$L:V\to V\cup \{\infty\}$$
is called a {\it generalized semilinear map} if it satisfies the following conditions:
\begin{enumerate}
\item[$\bullet$] $L(x+y)=L(x)+L(y)$ provided the second sum is defined;
\item[$\bullet$] there is a place $\sigma$ of $F$  such that 
$L(ax)=\sigma(a)L(x)$ provided the second product is defined;
\item[$\bullet$] $L(0)=0$.
\end{enumerate}
In this case, $M=L^{-1}(V)$ is a submodule of $V$ over the valuation ring $R=\sigma^{-1}(F)$.
Suppose that the following condition is satisfied:
\begin{enumerate}
\item[(*)]for every $1$-dimensional subspace $P\subset V$ there is $x\in P\cap M$ such that $L(x)\ne 0$.
\end{enumerate}
Then $L$ induces a lineation of ${\mathcal P}(V)$ to itself
which sends every $1$-dimensional subspace $P\subset V$
to the $1$-dimensional subspace containing $L(P\cap M)$.
Every non-degenerate lineation of ${\mathcal P}(V)$ to itself
is induced by a generalized semilinear map satisfying (*), see \cite{Faure,Mach,Rado}.

\begin{lemma}\label{lemma-cp1}
Let $\sigma$ be a place of the complex field ${\mathbb C}$ 
and let $R$ be the associated valuation ring.
Suppose that there is a generalized semilinear map $L:H\to H\cup\{\infty\}$ 
over $\sigma$ which satisfies {\rm (*)} and induces an orthogonality preserving lineation.
Then the following assertions are fulfilled: 
\begin{enumerate}
\item[(1)] $I_R$ is closed under complex conjugation;
\item[(2)] $R$ is closed under complex conjugation;
\item[(3)] $\sigma(\overline{a})=\overline{\sigma(a)}$ for every $a\in R$.
\end{enumerate}
\end{lemma}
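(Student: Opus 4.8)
The plan is to reduce everything to a single line and extract the three statements from the orthogonality relation between two explicitly parametrized points. Fix a $2$-dimensional subspace $S$ with orthonormal basis $e_1,e_2$, and let $S'$ be the $2$-dimensional subspace spanned by the images $f(\langle e_1\rangle)$ and $f(\langle e_2\rangle)$, which are orthogonal (hence distinct) since $e_1\perp e_2$. Because $f$ is non-degenerate (Lemma \ref{lemma-nondeg}), its restriction to the line $\mathcal P(S)$ is the projective map induced by the place $\sigma$: there are nonzero vectors $u_0\in f(\langle e_1\rangle)$ and $v_0\in f(\langle e_2\rangle)$ with $f(\langle e_1+te_2\rangle)=\langle u_0+\sigma(t)v_0\rangle$ for all $t\in R$ (reading $\sigma(t)=0$ on $I_R$), while $f(\langle e_1+te_2\rangle)=\langle v_0\rangle$ for $t\notin R$. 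These $u_0,v_0$ arise from condition (*) applied to $\langle e_1\rangle$ and $\langle e_2\rangle$, by rescaling the $L$-images of module representatives; once $\sigma$ is fixed, the only remaining freedom is a single common scalar, so the ratio $|u_0|/|v_0|$ is an invariant of the configuration.

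The first task is to normalize this frame. Since $e_1,e_2$ are orthonormal, $e_1+e_2\perp e_1-e_2$, and their images $\langle u_0+v_0\rangle$ and $\langle u_0-v_0\rangle$ must be orthogonal; expanding $\langle u_0+v_0,u_0-v_0\rangle$ and using $u_0\perp v_0$ gives $|u_0|=|v_0|$. Exploiting the common-scaling freedom, I may therefore take $u_0,v_0$ orthonormal while keeping the displayed parametrization valid. A direct computation also records the general orthogonal pair on $\mathcal P(S)$, namely $e_1+te_2\perp \bar t\,e_1-e_2$ for every $t$.

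Next I prove assertion (2). Suppose $t\in R$ but $\bar t\notin R$; then $(\bar t)^{-1}\in I_R$, so rewriting $\langle \bar t\,e_1-e_2\rangle=\langle e_1-(\bar t)^{-1}e_2\rangle$ and applying the parametrization yields $f(\langle \bar t\,e_1-e_2\rangle)=\langle u_0\rangle$. On the other hand $f(\langle e_1+te_2\rangle)=\langle u_0+\sigma(t)v_0\rangle$, and $\langle u_0+\sigma(t)v_0,u_0\rangle=|u_0|^2=1\neq 0$, so this image is not orthogonal to $\langle u_0\rangle$. This contradicts the fact that $f$ is orthogonality preserving, since $e_1+te_2\perp \bar t\,e_1-e_2$. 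Hence $t\in R$ forces $\bar t\in R$, which is (2). Assertion (3) now follows from the same orthogonal pair with both $t,\bar t\in R$: here $f(\langle \bar t\,e_1-e_2\rangle)=\langle \sigma(\bar t)u_0-v_0\rangle$, and imposing orthogonality with $\langle u_0+\sigma(t)v_0\rangle$ reduces, via $|u_0|=|v_0|=1$ and $u_0\perp v_0$, to $\overline{\sigma(\bar t)}=\sigma(t)$, i.e. $\sigma(\bar t)=\overline{\sigma(t)}$. Finally (1) is immediate: for $a\in I_R$, statement (2) gives $\bar a\in R$ and statement (3) gives $\sigma(\bar a)=\overline{\sigma(a)}=0$, so $\bar a\in I_R$.

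The main obstacle lies entirely in the first paragraph: justifying that the restriction of $f$ to a line is \emph{exactly} the $\sigma$-induced projective map with a well-defined image frame $u_0,v_0$ whose norm ratio is a genuine invariant, bearing in mind that $e_1,e_2$ need not lie in the module $M=L^{-1}(H)$ and that $L(e_i)$ may be an arbitrary nonzero multiple of $u_0,v_0$. Once this normal form is secured—together with the $t=1$ computation forcing $|u_0|=|v_0|$—the remaining arguments are short inner-product calculations and an elementary use of the valuation-ring dichotomy $t\in R$ or $t^{-1}\in R$.
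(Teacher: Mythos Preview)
Your approach is essentially the same as the paper's: both exploit the orthogonal pair $e_1+te_2 \perp \bar t\,e_1-e_2$ (the paper writes it as $ax-y\perp x+\bar a\,y$ with $\|x\|=\|y\|$) and read off the three assertions from the resulting inner-product identities. Your ordering $(2)\Rightarrow(3)\Rightarrow(1)$ is in fact a bit slicker than the paper's direct attack on (1) via two separate orthogonal pairs.

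The obstacle you flag in your last paragraph is precisely the piece the paper supplies, and it is \emph{not} a formality. Your parametrization $f(\langle e_1+te_2\rangle)=\langle u_0+\sigma(t)v_0\rangle$ requires $e_1,e_2\in M\setminus\mathrm{Ker}(L)$, not merely module representatives $\alpha e_1,\beta e_2$ from~(*) with unrelated scalars; otherwise the affine parameter on the line is $t\alpha/\beta$ rather than $t$, your claim that $|u_0|/|v_0|$ is an invariant fails (rescaling $x_2$ by a unit $\lambda\in R^\times$ multiplies it by $|\sigma(\lambda)|$), and the orthogonality computations no longer match. The paper fills this gap as its ``first claim'': if $x\in M\setminus\mathrm{Ker}(L)$ and $y\perp x$ with $\|y\|=\|x\|$, then $y\in M\setminus\mathrm{Ker}(L)$. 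The argument is short but uses orthogonality preservation in an essential way: if $y\in\mathrm{Ker}(L)$ then $L(x+y)=L(x-y)=L(x)\ne 0$, contradicting $x+y\perp x-y$; if $y\notin M$ then by~(*) some $ay\in M\setminus\mathrm{Ker}(L)$ with $a\in I_R$, and $L(ay+ax)=L(ay-ax)=L(ay)\ne 0$ yields the same contradiction. Once you insert this (and then simply take $x=e_1$, $y=e_2$), your proof is complete and agrees with the paper's.

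One minor point: your appeal to Lemma~\ref{lemma-nondeg} is not needed here, since the hypothesis of the lemma already hands you the generalized semilinear map $L$; the parametrization comes from $L$ directly.
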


\begin{proof}
Let $M=L^{-1}(H)$. 
Since $L$ induces an orthogonality preserving lineation,
for any two orthogonal vectors from $M\setminus {\rm Ker}(L)$ the images are orthogonal.
We take any non-zero $x\in M\setminus {\rm Ker}(L)$ and any $y\in H$ orthogonal to $x$ satisfying $||y||=||x||$.
Our first claim is that $y\in M\setminus {\rm Ker}(L)$.

If $y\in {\rm Ker}(L)$, then 
$$L(x+y)=L(x-y)=L(x)\ne 0;$$
on the other hand, the vectors $x+y,x-y$ are orthogonal and we come to a contradiction. 
Assume that $y\not\in M$.  Then, by (*), we have $ay\in M\setminus {\rm Ker}(L)$ for a certain $a\in {\mathbb C}$.
In this case, $a\in I_{R}$ (otherwise, $a^{-1}\in R$ and $y=a^{-1}ay\in M$, contradicting the assumption).
Therefore, $ax,ay\in M$ and 
$$L(ay+ax)=L(ay-ax)=L(ay)\ne 0;$$
but the vectors $ay+ax,ay-ax$ are orthogonal and we get a contradiction again. 

(1). Let $a$ be a non-zero element of $I_R$. 
Assume that $\overline{a}\not\in R$.
Then $\overline{a^{-1}}=\overline{a}^{-1}$ belongs to $I_R$.
The vectors $ax-y$ and $\overline{a^{-1}}x+y$ are orthogonal and belong to $M\setminus {\rm Ker}(L)$.
Consequently, 
$$L(ax-y)=-L(y)\;\mbox{ and }\; L(\overline{a^{-1}}x+y)=L(y)$$
are orthogonal which is impossible.
Furthermore, $ax-y$ and $x+\overline{a}y$ are orthogonal vectors from $M\setminus {\rm Ker}(L)$
which implies that 
$$L(y-ax)=L(y)\;\mbox{ and }\; L(x+\overline{a}y)=L(x)+\sigma(\overline{a})L(y)$$
are orthogonal.
Since $L(x),L(y)$ are orthogonal, the latter is possible only in the case when $\sigma(\overline{a})=0$,
i.e. $\overline{a}\in I_R$.

(2). If $\overline{a}\not\in R $ for a certain non-zero $a\in {\mathbb C}$, then $(\overline{a})^{-1}\in I_{R}$
and by (1) we have $a^{-1}\in I_R$, i.e. $a\not\in R$.

(3). Let $a\in R$. 
Then $ax-y, x+\overline{a}y$ are orthogonal vectors in $M\setminus {\rm Ker}(L)$ and
the vectors
$$\sigma(a)L(x)-L(y)\;\mbox{ and }\;L(x)-\sigma(\overline{a})L(y)$$
also are orthogonal.
This implies that $\sigma(a)=\overline{\sigma(\overline{a})}$.
\end{proof}

\begin{lemma}\label{lemma-rp}
Every place of the real field ${\mathbb R}$ is the identity.
\end{lemma}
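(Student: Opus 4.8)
The plan is to use the interaction between the multiplicative structure of $\mathbb{R}$ (every nonnegative real is a square) and its order, together with the density of $\mathbb{Q}$, first to prove that the valuation ring $R=\sigma^{-1}(\mathbb{R})$ must be all of $\mathbb{R}$, and then to show that $\sigma$ fixes each real number.

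First I would record the routine facts: from $\sigma(1)=1$ and additivity/multiplicativity, every integer lies in $R$ and is fixed by $\sigma$, each nonzero integer is a unit of $R$ (it is not in $I_R$ since its image is nonzero), and hence $\mathbb{Q}\subset R$ with $\sigma|_{\mathbb{Q}}=\mathrm{id}$. The next, crucial, observation is a sign condition. Let $a\in R$ with $a>0$ and write $a=b^2$ for some $b\in\mathbb{R}$. I claim $b\in R$: otherwise $b^{-1}\in I_R$, so $a^{-1}=(b^{-1})^2\in I_R$ and $\sigma(a^{-1})=0$, contradicting $\sigma(a)\sigma(a^{-1})=1$. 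Hence $\sigma(a)=\sigma(b)^2\ge 0$. Applying this to differences shows that $\sigma$ is order-preserving on $R$: if $a\le c$ in $R$, then $\sigma(c)-\sigma(a)=\sigma(c-a)\ge 0$.

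The heart of the argument is to show $R=\mathbb{R}$. Suppose some real $a\notin R$. As $R$ is a ring containing $-1$, also $-a\notin R$, so I may assume $a>0$; then $a^{-1}\in I_R$ with $a^{-1}>0$ and $\sigma(a^{-1})=0$. For every positive rational $q$ the difference $a^{-1}-q$ lies in $R$, so if $a^{-1}>q$ the sign condition would give $\sigma(a^{-1}-q)=-q<0$, which is impossible. Thus $a^{-1}\le q$ for all positive rationals $q$, forcing $a^{-1}\le 0$ and contradicting $a^{-1}>0$. Hence $R=\mathbb{R}$, and $\sigma$ is an order-preserving field endomorphism of $\mathbb{R}$ fixing $\mathbb{Q}$.

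Finally I would squeeze: for any real $a$ and rationals $p<a<q$, monotonicity together with $\sigma|_{\mathbb{Q}}=\mathrm{id}$ yields $p\le\sigma(a)\le q$, so that $\sigma(a)=a$. I expect the main obstacle to be the third step, namely excluding the value $\infty$, since it is precisely there that the Archimedean density of $\mathbb{Q}$ in $\mathbb{R}$ must be combined with the square structure; the surrounding steps are then routine.
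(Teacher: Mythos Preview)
Your proof is correct and follows essentially the same route as the paper's: both use the square-root trick to show that $\sigma$ is order-preserving on $R$, then combine this with $\sigma|_{\mathbb Q}=\mathrm{id}$ and the Archimedean property to force $I_R=0$ (equivalently $R=\mathbb R$), reducing to the standard fact that an order-preserving endomorphism of $\mathbb R$ is the identity. The only cosmetic differences are that the paper argues $I_R=0$ directly while you phrase it as ruling out $a\notin R$ via $a^{-1}\in I_R$, and you spell out the final squeeze whereas the paper simply cites that every nonzero endomorphism of $\mathbb R$ is the identity.
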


\begin{proof}
Let $\sigma: {\mathbb R}\to {\mathbb R}\cup\{\infty\}$ be a place
and let $R$ be the associated valuation ring.

First of all, we establish that $\sigma$ is order preserving, i.e. 
if $a,b\in R$, then $a\le b$ implies that $\sigma(a)\le \sigma(b)$.
It is sufficient to show that, for $a \in R$, we have that $\sigma(a)>0$ if $a >0$.
This  statement is trivial for $a\in I_R$.
Let $a\in R\setminus I_R$. Then $a^{-1}\in R\setminus I_R$.
We assert that $\sqrt{a}\in R$. 
Indeed, if $\sqrt{a}\not\in R$, then $\left (\sqrt{a}\right )^{-1}=\sqrt{a^{-1}}\in I_R$
which implies that $a^{-1}=\left (\sqrt{a^{-1}}\right )^2\in I_R$, a contradiction.
Since $\sqrt{a}\in R$, we have $\sigma(a)=\sigma\left(\sqrt{a}\right)^2\ge 0$.

The equality $\sigma(1)=1$ implies that $\sigma(n)=n$ for every $n\in {\mathbb N}$.
Let $a\in R$.
If $a>1$, then $\sigma(a)\ge 1$, in particular, $\sigma(a)\ne 0$. 
If $0<a<1$, then there is  $n\in {\mathbb N}$ such that 
$an>1$ and $\sigma(a)n=\sigma(an)\ne 0$ which means that $\sigma(a)\ne 0$.
Therefore, $\sigma(a)\ne 0$ for the case when $a\ge 0$.
If $a<0$, then $a^2 >0$ and
$\sigma(a)^2=\sigma(a^2)\ne 0$ which implies that $\sigma(a)\ne 0$.

So, $\sigma(a)\ne 0$ for all non-zero $a\in R$ and $I_{R}=0$,
i.e. $R$ coincides with ${\mathbb R}$. 
It is well-known that every non-zero endomorphism of ${\mathbb R}$ is the identity. 
\end{proof}

\begin{lemma}\label{lemma-cp2}
Let $\sigma$ be a place of the complex field ${\mathbb C}$ 
and let $R$ be the associated valuation ring.
If $\sigma(\overline{a})=\overline{\sigma(a)}$ 
for all $a\in R$, then $R$ coincides with ${\mathbb C}$ and $\sigma$ is the identity or
the complex conjugation.
\end{lemma}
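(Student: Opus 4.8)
The plan is to reduce the complex statement to the corresponding real one and then invoke Lemma \ref{lemma-rp}. The guiding observation is that the fixed field of complex conjugation is exactly $\mathbb{R}$, so the compatibility hypothesis $\sigma(\overline{a})=\overline{\sigma(a)}$ forces $\sigma$ to carry real elements of $R$ to real numbers; restricting $\sigma$ to $\mathbb{R}$ should therefore produce a place of $\mathbb{R}$, which by Lemma \ref{lemma-rp} can only be the identity. Once $\sigma$ is pinned down on $\mathbb{R}$, the whole map is determined by its value on $i$, and the relation $i^2=-1$ leaves only two possibilities.

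First I would observe that the hypothesis already guarantees that $R$ is closed under complex conjugation: for $a\in R$ the quantity $\overline{\sigma(a)}$ is finite, hence so is $\sigma(\overline{a})$, i.e.\ $\overline{a}\in R$. Next, for real $a\in R$ one has $\overline{a}=a$, so $\sigma(a)=\overline{\sigma(a)}$ and $\sigma(a)$ is real. Thus $\sigma$ restricts to a map $\sigma|_{\mathbb{R}}\colon \mathbb{R}\to\mathbb{R}\cup\{\infty\}$ sending $1$ to $1$ and respecting the partial sum and product; that is, $\sigma|_{\mathbb{R}}$ is a place of $\mathbb{R}$ whose valuation ring is $R\cap\mathbb{R}$. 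By Lemma \ref{lemma-rp} this place is the identity. In particular it is finite everywhere, so $\mathbb{R}\subseteq R$ and $\sigma(a)=a$ for every real $a$.

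It then remains to analyse $\sigma$ on $i$. I would first check that $i\in R$: since $R$ is a valuation ring, either $i\in R$ or $i^{-1}=-i\in R$, and in the latter case closure under conjugation gives $\overline{-i}=i\in R$ as well. Applying $\sigma$ to $i^2=-1$ and using $\sigma(-1)=-1$ yields $\sigma(i)^2=-1$, so $\sigma(i)=i$ or $\sigma(i)=-i$. Finally, for an arbitrary $z=a+bi$ with $a,b\in\mathbb{R}$ the elements $a,b,i$ all lie in $R$, whence $z\in R$ and therefore $R=\mathbb{C}$; moreover $\sigma(z)=a+b\,\sigma(i)$ equals $z$ when $\sigma(i)=i$ and $\overline{z}$ when $\sigma(i)=-i$. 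This gives the two claimed alternatives.

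The computations here are light, so there is no serious obstacle; the only points that require genuine care are the two places where the hypothesis is actually used. One is the verification that $\sigma|_{\mathbb{R}}$ lands in $\mathbb{R}\cup\{\infty\}$ rather than producing non-real values (this is precisely where $\sigma(\overline{a})=\overline{\sigma(a)}$ enters), and the other is establishing $i\in R$ via closure under conjugation. Everything else is a formal consequence of Lemma \ref{lemma-rp} and the ring structure of $R$.
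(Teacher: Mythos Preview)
Your proof is correct and follows essentially the same route as the paper: restrict $\sigma$ to $\mathbb{R}$, apply Lemma~\ref{lemma-rp} to conclude $\mathbb{R}\subset R$ with $\sigma|_{\mathbb{R}}=\mathrm{id}$, and then determine $\sigma(i)$ from $\sigma(i)^2=-1$. Your write-up is in fact more explicit than the paper's, spelling out the closure of $R$ under conjugation, the reason $i\in R$, and the final identification $R=\mathbb{C}$ with $\sigma(z)=z$ or $\sigma(z)=\overline{z}$; the paper leaves these as tacit consequences.
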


\begin{proof}
The intersection $R\cap {\mathbb R}$ is a valuation ring of ${\mathbb R}$
and $\sigma(R\cap {\mathbb R})\subset {\mathbb R}$.
By Lemma \ref{lemma-rp}, the ring $R\cap {\mathbb R}$ coincides with ${\mathbb R}$
and the restriction of $\sigma$ to ${\mathbb R}$ is identity.

The equality $\sigma(i)^2=\sigma(i^2)=\sigma(-1)=-1$ implies that $\sigma(i)=i$ or $\sigma(i)=-i$.
\end{proof}

Now, we are ready to prove Theorem \ref{theorem-lin}.
Let $f:{\mathcal P}(H)\to {\mathcal P}(H)$ be an orthogonality preserving lineation.
By Lemma \ref{lemma-nondeg}, $f$ is non-degenerate and, consequently, it is induced
by a generalized semilinear map $L:H\to H$.
Lemmas \ref{lemma-cp1} and \ref{lemma-cp2} guarantee that 
the associated place is the identity or the complex conjugation. 
Since for every $1$-dimensional subspace $P\subset H$
there is a vector $x\in P\cap L^{-1}(H)$ such that $L(x)\ne 0$,
the map $L$ is a linear or conjugate-linear injection.
Also, it sends orthogonal vectors to orthogonal vectors. 
Thus, $L$ is a scalar multiple of a linear or conjugate-linear isometry and Theorem \ref{theorem-lin} is proved.

\begin{rem}{\rm
It follows from Lemma \ref{lemma-rp} that every non-degenerate lineation between real projective 
spaces is induces by a linear injection.
Since Gleason's theorem holds for the real case as well, 
orthogonality preserving lineations of real projective spaces  are non-degenerate.
Therefore, every orthogonality preserving transformation of a finite-dimensional real projective space 
is induced by an orthogonal bijection.
}\end{rem}

\section{Application of Theorem \ref{theorem-lin}}
Recall that two closed subspaces of $H$ are {\it compatible} if there is an orthogonal basis of $H$ such that 
these subspaces are spanned by subsets of this basis. 
This holds if and only if the corresponding projections commute. 

Denote by ${\mathcal G}_{k}(H)$ the Grassmannian formed by $k$-dimensional subspaces of $H$.
Note that ${\mathcal G}_{1}(H)={\mathcal P}(H)$ is the projective space associated to $H$.
If the dimension of $H$ is not less than $2k$, 
then the orthogonality relation is defined on the Grassmannian ${\mathcal G}_k(H)$.
Two $k$-dimensional subspaces of $H$ are called {\it adjacent} if their intersection is $(k-1)$-dimensional,
and these subspaces are said to be {\it ortho-adjacent} if they are adjacent and compatible. 
For $k=1$ the adjacency relation is trivial (any two distinct $1$-dimensional subspaces are adjacent).

The orthogonality, adjacency and ortho-adjacency relations can be described in terms of principal angles 
\cite[Section VII.1]{Bh}. 
It is clear that two $k$-dimensional subspaces of $H$ are orthogonal if and only if 
all principal angles between them are $\frac{\pi}{2}$.
These subspace are adjacent if and only if  precisely one of the principal angles 
is non-zero; furthermore, these subspaces are ortho-adjacent only in the case when this angle is $\frac{\pi}{2}$.
Transformations of Grassmannians preserving the principal angles are described in \cite{Molnar2, Molnar3}.
Using Theorem \ref{theorem-lin} we prove the following.

\begin{theorem}\label{theorem-wig-gr}
Suppose that $\dim H>2k>2$.
Let $f$ be an orthogonality preserving transformation of  ${\mathcal G}_k(H)$.
Then the following two conditions are equivalent:
\begin{enumerate}
\item[(A)] $f$ is ortho-adjacency preserving, i.e. 
for any ortho-adjacent $k$-dimensional subspaces  $X,Y\subset H$
the images $f(X), f(Y)$ are ortho-adjacent;
\item[(B)] for any adjacent $k$-dimensional subspaces  $X,Y\subset H$
the images $f(X), f(Y)$ are adjacent or $f(X)=f(Y)$. 
\end{enumerate}
If one of these conditions holds, then $f$ is induced by 
a linear or conjugate-linear isometry of $H$ to itself.  
\end{theorem}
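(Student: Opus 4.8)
The plan is to reduce the problem to the case $k=1$, where Theorem \ref{theorem-lin} applies directly, by passing from the Grassmannian ${\mathcal G}_k(H)$ to the projective space ${\mathcal P}(H)$ via a suitable lattice-theoretic encoding of $1$-dimensional subspaces in terms of the adjacency and ortho-adjacency structure. First I would establish the equivalence (A) $\Leftrightarrow$ (B). The implication from the stronger-looking geometric hypotheses to each other rests on the principal-angle characterization recalled just before the statement: ortho-adjacency is the special case of adjacency in which the single non-zero principal angle equals $\tfrac{\pi}{2}$. I would show that an orthogonality preserving $f$ automatically controls this angle, so that adjacency is sent to adjacency-or-collapse exactly when ortho-adjacency is preserved. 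The point is that orthogonality of $X,Y$ (all principal angles $\tfrac{\pi}{2}$) is preserved by assumption, and ortho-adjacency sits between plain adjacency and full orthogonality; a careful bookkeeping of which principal angles can occur among images of adjacent pairs yields both directions.

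Assuming (A) and (B), the core task is to reconstruct a transformation of ${\mathcal P}(H)={\mathcal G}_1(H)$ from $f$. The standard device is that adjacency makes ${\mathcal G}_k(H)$ into the point set of a graph (the Grassmann graph), and maximal cliques come in two families: the \emph{stars} (all $k$-subspaces containing a fixed $(k-1)$-subspace) and the \emph{tops} (all $k$-subspaces contained in a fixed $(k+1)$-subspace). An adjacency preserving map that does not collapse everything must send cliques into cliques, and by a Chow-type argument one recovers how $f$ acts on $(k-1)$-subspaces and on $(k+1)$-subspaces, and ultimately on the $1$-dimensional subspaces sitting at the bottom of a flag. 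The ortho-adjacency hypothesis is what lets me keep track of the orthogonal structure through this reduction: compatible adjacent pairs correspond to the situation where the ambient $(k+1)$-space splits as an orthogonal sum of the common $(k-1)$-space and a $2$-dimensional piece, and orthogonality inside that $2$-dimensional piece is preserved. This produces an induced map $g$ on ${\mathcal P}(H)$ that is orthogonality preserving and sends lines into lines, i.e. an orthogonality preserving lineation.

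Once $g$ is in hand, Theorem \ref{theorem-lin} shows $g$ is induced by a linear or conjugate-linear isometry $U$ of $H$. The final step is to verify that $f$ itself is then induced by this same $U$, i.e. that $f(X)=U(X)$ for every $X\in{\mathcal G}_k(H)$. Here I would argue that each $k$-subspace is determined by the $1$-subspaces it contains (or by an orthogonal basis), that $f$ respects this incidence because it preserves adjacency and ortho-adjacency, and that consequently $f$ and the transformation $X\mapsto U(X)$ agree on a generating set of configurations, hence everywhere.

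The main obstacle I anticipate is the possibility of \emph{collapse}: condition (B) explicitly allows $f(X)=f(Y)$ for adjacent $X,Y$, so $f$ need not be injective, and the Grassmann-graph clique machinery must be run in the presence of degenerate images. The delicate part is ruling out a global degeneration analogous to the two-valued-measure obstruction handled in Lemma \ref{lemma-nondeg}; I expect to again invoke Gleason's theorem (through the non-degeneracy already guaranteed for orthogonality preserving lineations) to show that the induced map $g$ on ${\mathcal P}(H)$ cannot degenerate, after which the reconstruction of $f$ from $g$ forces $f$ to be the genuine isometry-induced transformation and in particular a bijection. Translating the ortho-adjacency hypothesis cleanly into the non-degeneracy input for the $k=1$ reduction, while permitting adjacency-collapse at the level of ${\mathcal G}_k(H)$, is where the real work lies.
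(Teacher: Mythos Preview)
Your overall plan---reduce to $k=1$ via the star/top clique structure of the Grassmann graph, obtain an orthogonality preserving lineation $g$ on ${\mathcal P}(H)$, apply Theorem~\ref{theorem-lin}, and pull the isometry back up to ${\mathcal G}_k(H)$---is exactly the paper's approach, and your handling of the collapse issue is fine: once $g$ exists, Theorem~\ref{theorem-lin} already contains the Gleason-based non-degeneracy, so no extra work is needed there.

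The weak point is your argument for the equivalence {\rm(A)} $\Leftrightarrow$ {\rm(B)}. ``Bookkeeping of principal angles'' is not the mechanism, and I do not see how to make it one: an orthogonality preserving $f$ gives you no direct control over the single non-zero principal angle between images of an adjacent pair. The paper uses two separate combinatorial facts about $\Gamma_k(H)$ that you do not mention. For {\rm(B)} $\Rightarrow$ {\rm(A)} it uses that any geodesic in $\Gamma_k(H)$ between \emph{orthogonal} $k$-subspaces consists entirely of mutually compatible subspaces, and that any compatible pair lies on such a geodesic (Lemma~\ref{lemma-geodesic}); since $f$ preserves orthogonality and does not increase graph distance, it sends such geodesics to geodesics, hence ortho-adjacent pairs to ortho-adjacent pairs. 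For {\rm(A)} $\Rightarrow$ {\rm(B)} it embeds an arbitrary adjacent pair $X,Y$ into a configuration with two auxiliary elements $X',Y'$ so that $\{X,X',Y'\}$ and $\{Y,X',Y'\}$ are each mutually ortho-adjacent, and then uses the cardinality of maximal compatible subsets of stars versus tops (Lemma~\ref{lemma-comp-star-top}, which needs $\dim H>2k$) to force the image into a single star; this gives $f(X')\cap f(Y')\subset f(X)\cap f(Y)$, hence adjacency or equality. You should replace the principal-angle heuristic with these two ingredients.
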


The proof of Theorem \ref{theorem-wig-gr} is based on some properties of
the Grassmann graph $\Gamma_k(H)$ whose vertices are $k$-dimensional subspaces of $H$
and two vertices are connected by an edge if they are adjacent subspaces. 

A {\it clique} of a graph is a subset in the vertex set, where any two distinct vertices are adjacent.  
In the Grassmann graph $\Gamma_k(H)$ there are the following two types of maximal cliques:
\begin{enumerate}
\item[$\bullet$] the {\it star} ${\mathcal S}(X)$, where $X\in {\mathcal G}_{k-1}(H)$,
consists of all $k$-dimensional subspaces containing $X$;
\item[$\bullet$] the {\it top} ${\mathcal G}_{k}(Y)$, where $Y\in {\mathcal G}_{k+1}(H)$,
consists of all $k$-dimensional subspaces of $Y$.
\end{enumerate}
See \cite[Proposition 2.14]{Pankov-book}.

A subset of ${\mathcal G}_k(H)$ formed by mutually compatible subspaces
is said to be {\it compatible}. 

\begin{lemma}[Lemma 4.30 in \cite{Pankov-book}]\label{lemma-comp-star-top}
Every maximal compatible subset of a top contains precisely $k +1$ elements. 
Every maximal compatible subset of a star contains precisely $n-k + 1$ elements if 
$\dim H=n$ is finite, and this set is infinite if $H$ is infinite-dimensional.
\end{lemma}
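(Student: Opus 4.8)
The plan is to reduce both assertions to a statement about maximal families of mutually orthogonal rays, exploiting the fact (noted in the paper) that two closed subspaces are compatible if and only if the associated projections commute. In each case I would set up a natural bijection between the maximal clique in question (a top or a star) and a projective space, under which compatibility translates into orthogonality of rays; maximal compatible subsets then correspond exactly to orthonormal bases of a suitable subspace, and the counts follow from its dimension.

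For the top ${\mathcal G}_k(Y)$ with $Y\in {\mathcal G}_{k+1}(H)$, I would regard each member $P$ of the top as a hyperplane of the $(k+1)$-dimensional space $Y$ and associate to it the ray $p=P^{\perp}\cap Y$, obtaining a bijection of the top onto ${\mathcal P}(Y)$. Since $P\subset Y$ and $p\subset Y$ with $Y=P\oplus p$ orthogonally, the projection onto $P$ equals $\pi_Y-\pi_p$, where $\pi_Y$ and $\pi_p$ are the projections onto $Y$ and $p$. A short computation using $\pi_Y\pi_q=\pi_q$ and $\pi_p\pi_Y=\pi_p$ shows that the projections onto two members $P,Q$ commute if and only if $\pi_p$ and $\pi_q$ commute, i.e. if and only if the rays $p,q$ are orthogonal or equal. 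Hence a subset of the top is compatible exactly when its associated rays are pairwise orthogonal, and a maximal such family is an orthonormal basis of $Y$, giving precisely $\dim Y=k+1$ elements.

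For the star ${\mathcal S}(X)$ with $X\in {\mathcal G}_{k-1}(H)$, I would write each member $P\supset X$ as the orthogonal sum $P=X\oplus p$ with $p=P\cap X^{\perp}$ a ray in $X^{\perp}$, yielding a bijection of the star onto ${\mathcal P}(X^{\perp})$. Because $p\perp X$, the projection onto $P$ is $\pi_X+\pi_p$ with $\pi_X\pi_p=\pi_p\pi_X=0$, and likewise for $Q$. Expanding the products gives $\pi_P\pi_Q=\pi_X+\pi_p\pi_q$ and $\pi_Q\pi_P=\pi_X+\pi_q\pi_p$, so the projections onto $P$ and $Q$ commute if and only if $\pi_p$ and $\pi_q$ commute, i.e. if and only if $p$ and $q$ are orthogonal or equal. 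Thus a compatible subset of the star corresponds to a family of pairwise orthogonal rays in $X^{\perp}$, and a maximal one to an orthonormal basis of $X^{\perp}$. When $\dim H=n$ is finite we have $\dim X^{\perp}=n-k+1$, giving the stated count; when $H$ is infinite-dimensional, $X^{\perp}$ is infinite-dimensional and every orthonormal basis is infinite.

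The routine parts are the two bijections and the observation that a maximal family of pairwise orthogonal rays is an orthonormal basis of the ambient subspace. The step requiring the most care is the projection identity in the star: one must verify that the common summand $X$ contributes the same term $\pi_X$ to both $\pi_P\pi_Q$ and $\pi_Q\pi_P$ and therefore cancels, so that commutativity is governed solely by the rays $p,q$ in $X^{\perp}$. Once this reduction is in place, both counts follow immediately from the dimension of $Y$ and of $X^{\perp}$.
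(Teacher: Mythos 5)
Your proof is correct: the bijections top $\leftrightarrow{\mathcal P}(Y)$ via $p=P^{\perp}\cap Y$ and star $\leftrightarrow{\mathcal P}(X^{\perp})$ via $p=P\cap X^{\perp}$ do carry compatibility to the relation ``equal or orthogonal'' on rays (using that rank-one projections commute precisely in those two cases), and the counts $k+1$, $n-k+1$, respectively infinitude, follow from $\dim Y$ and $\dim X^{\perp}$. The paper itself states this lemma as a citation of Lemma 4.30 in \cite{Pankov-book} without reproducing a proof, and your reduction to commuting projections and maximal families of mutually orthogonal rays is exactly the standard argument behind that result.
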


A distance  $d(v,w)$ between to vertices $v$ and $w$ in a connected graph is defined as 
the smallest number $m$ such that there is a path consisting of $m$ edges and connecting $v$ with $w$;
every path connecting $v$ with $w$ and consisting of $d(v,w)$ edges is called {\it geodesic}.
The Grassmann graph $\Gamma_k(H)$ is connected and the distance between 
$k$-dimensional subspaces $X,Y\subset H$ in this graph is equal to 
$$k-\dim(X\cap Y)=\dim (X+Y)-k;$$
in particular, the distance between orthogonal $k$-dimensional subspaces is $k$.

\begin{lemma}[Lemma 4.31 in \cite{Pankov-book}]\label{lemma-geodesic}
Every geodesic in $\Gamma_k(H)$ connecting orthogonal subspaces consists of mutually compatible subspaces. 
Any two compatible $k$-dimensional subspaces $X,Y\subset H$ are contained in a certain geodesic of 
$\Gamma_k(H)$ connecting $X$ with a subspace orthogonal to $X$.
\end{lemma}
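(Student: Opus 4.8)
\emph{Overall approach.} The plan is to reduce both assertions to an explicit description of those geodesics of $\Gamma_k(H)$ that join a pair of \emph{orthogonal} subspaces, and then to read off compatibility from a single adapted orthonormal basis. Throughout I use the distance formula recalled just above, $d(X,Y)=k-\dim(X\cap Y)$. Fix orthogonal $k$-dimensional subspaces $X,Y$, so that $d(X,Y)=k$ and the $2k$-dimensional space $W=X+Y$ is the orthogonal direct sum of $X$ and $Y$; since $W$ is finite-dimensional, the infinite-dimensional case will cause no extra difficulty.

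\emph{First statement.} For a geodesic $X=Z_0,Z_1,\dots,Z_k=Y$ the additivity of distances along it forces $d(X,Z_i)=i$ and $d(Z_i,Y)=k-i$, hence $\dim(X\cap Z_i)=k-i$ and $\dim(Z_i\cap Y)=i$. Because $X\cap Y=0$, the subspaces $A_i=X\cap Z_i$ and $B_i=Z_i\cap Y$ meet trivially, their dimensions sum to $k=\dim Z_i$, and they are mutually orthogonal; therefore $Z_i=A_i\oplus B_i$ is an orthogonal decomposition with $A_i\subset X$ and $B_i\subset Y$. I would then use adjacency of consecutive vertices, $\dim(Z_i\cap Z_{i+1})=k-1$, together with a dimension count to show $A_{i+1}=A_i\cap Z_{i+1}\subset A_i$ and, symmetrically, $B_i\subset B_{i+1}$. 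Thus the $A_i$ form a complete flag in $X$ and the $B_i$ a complete flag in $Y$. Choosing an orthonormal basis $e_1,\dots,e_k$ of $X$ adapted to the flag $(A_i)$ and an orthonormal basis $g_1,\dots,g_k$ of $Y$ adapted to $(B_i)$, every $Z_i$ is spanned by a subset of the orthonormal family $e_1,\dots,e_k,g_1,\dots,g_k$; extending this family to an orthonormal basis of $H$ exhibits all the $Z_i$ as spanned by subsets of one common orthonormal basis, which is precisely mutual compatibility.

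\emph{Second statement.} Given compatible $k$-dimensional $X,Y$, the commuting projections split $H$ into the four mutually orthogonal pieces $X\cap Y$, $X\cap Y^{\perp}$, $X^{\perp}\cap Y$ and $X^{\perp}\cap Y^{\perp}$; writing $m=d(X,Y)=k-\dim(X\cap Y)$ I obtain $\dim(X^{\perp}\cap Y)=m$ and the orthogonal splitting $Y=(X\cap Y)\oplus(X^{\perp}\cap Y)$. Since $\dim X^{\perp}\ge k$ (this is where the standing hypothesis $\dim H\ge 2k$ enters), I can enlarge $X^{\perp}\cap Y$ to a $k$-dimensional subspace $X'\subset X^{\perp}$, which is then orthogonal to $X$ with $d(X,X')=k$. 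Reversing the analysis of the first part, I build a geodesic from $X$ to $X'$ by prescribing a complete flag $(A_i)$ in $X$ passing through $A_m=X\cap Y$, a complete flag $(B_i)$ in $X'$ passing through $B_m=X^{\perp}\cap Y$, and setting $Z_i=A_i\oplus B_i$; a short check shows that consecutive $Z_i$ are adjacent and $d(X,Z_i)=i$, so this is a genuine geodesic, and $Z_m=(X\cap Y)\oplus(X^{\perp}\cap Y)=Y$ lies on it.

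\emph{Main obstacle.} The delicate step is the structural part of the first assertion: turning the purely metric data (adjacency together with the distance values $d(X,Z_i)=i$) into the rigid flag-plus-direct-sum picture $Z_i=A_i\oplus B_i$ with nested $A_i$ and $B_i$. Once that is in place, compatibility is immediate from an adapted basis, and the converse is essentially the same computation run backwards, the only extra ingredient being the inequality $\dim X^{\perp}\ge k$ that guarantees enough room to select the orthogonal endpoint $X'$.
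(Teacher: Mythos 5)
Your proposal is correct; note that the paper itself offers no proof of this statement, quoting it verbatim as Lemma 4.31 of \cite{Pankov-book}, so the only comparison available is with the standard argument, which is exactly the one you give. The single step you flag as delicate does go through, and it is worth recording why: since $X\cap Y=0$, every $v\in Z_i\cap Z_{i+1}$ has unique decompositions $v=a+b=a'+b'$ with $a\in A_i$, $b\in B_i$, $a'\in A_{i+1}$, $b'\in B_{i+1}$, and $a-a'=b'-b\in X\cap Y=0$ forces $a=a'$, $b=b'$; hence
$Z_i\cap Z_{i+1}=(A_i\cap A_{i+1})\oplus(B_i\cap B_{i+1})$.
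Now $\dim(A_i\cap A_{i+1})\le k-i-1$ and $\dim(B_i\cap B_{i+1})\le i$, and since the left-hand side has dimension $k-1$ both bounds are attained, which is precisely $A_{i+1}\subset A_i$ and $B_i\subset B_{i+1}$. With that, your adapted-basis conclusion (an orthonormal basis of $X$ refining the descending flag $(A_i)$, one of $Y$ refining the ascending flag $(B_i)$, extended to an orthonormal basis of $H$) correctly yields mutual compatibility, including in the infinite-dimensional case, since any orthonormal family extends to an orthonormal basis. Your converse construction is also sound: the fourfold splitting from the commuting projections gives $Y=(X\cap Y)\oplus(X^{\perp}\cap Y)$, the hypothesis $\dim H>2k$ provides $\dim X^{\perp}\ge k$ so that $X'\supset X^{\perp}\cap Y$ exists inside $X^{\perp}$, and the path $Z_i=A_i\oplus B_i$ has $\dim(Z_i\cap Z_{i+1})=(k-i-1)+i=k-1$, so it is a path of length $k=d(X,X')$, i.e.\ a geodesic, passing through $Z_m=Y$. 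No gaps.
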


We start to prove Theorem \ref{theorem-wig-gr}.
Let $f$ be an orthogonality preserving transformation of ${\mathcal G}_k(H)$.

\begin{lemma}\label{lemma-2.1}
The conditions {\rm(A)} and {\rm(B)} from Theorem \ref{theorem-wig-gr} are equivalent. 
\end{lemma}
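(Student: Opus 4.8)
=== PROOF PROPOSAL ===

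\textbf{The plan} is to show the two conditions are equivalent by exploiting the two types of maximal cliques in the Grassmann graph and the compatibility structure described in Lemmas \ref{lemma-comp-star-top} and \ref{lemma-geodesic}. The key observation is that both conditions concern the behavior of $f$ on adjacent pairs, and the difference between (A) and (B) is precisely the additional compatibility requirement built into ortho-adjacency. So the task reduces to translating the orthogonality-preserving property of $f$ into a statement about how $f$ treats compatible adjacent pairs versus general adjacent pairs.

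\emph{The direction (B)$\Rightarrow$(A)} I expect to be the easier one. Suppose $X,Y$ are ortho-adjacent, so they are adjacent and compatible. By Lemma \ref{lemma-geodesic}, any two compatible $k$-dimensional subspaces lie on a geodesic connecting $X$ to a subspace orthogonal to $X$; since $X,Y$ are adjacent (distance one) and compatible, they sit at the start of such a geodesic, and I would use the fact that $f$ preserves orthogonality together with (B) to force $f(X),f(Y)$ to be both adjacent and compatible. The point is that orthogonality preservation controls the endpoints of the geodesic, and the distance computation $d(X,Y)=k-\dim(X\cap Y)$ lets me track how $f$ acts along the chain: if $f$ collapsed the adjacency or destroyed compatibility, the images of the orthogonal endpoints could not remain at distance $k$ in the compatible configuration demanded by Lemma \ref{lemma-geodesic}.

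\emph{The direction (A)$\Rightarrow$(B)} is where I anticipate the main obstacle. Here I must show that assuming only preservation of ortho-adjacency, an arbitrary adjacent pair $X,Y$ (not necessarily compatible) still maps to an adjacent pair or a single point. The strategy is to embed the adjacent pair $X,Y$ into a suitable star ${\mathcal S}(X\cap Y)$ or top, where $X\cap Y\in{\mathcal G}_{k-1}(H)$, and to use the compatible subsets of these cliques as a rigid combinatorial skeleton. By Lemma \ref{lemma-comp-star-top}, maximal compatible subsets of stars and tops have prescribed cardinalities ($n-k+1$ and $k+1$ respectively), and condition (A) guarantees that $f$ respects ortho-adjacency, hence maps mutually ortho-adjacent (equivalently, pairwise adjacent and compatible) families into pairwise ortho-adjacent families. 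The difficulty is that (A) gives me information only about compatible adjacent pairs, so I must \emph{recover} the behavior of $f$ on a general adjacent pair from its behavior on nearby compatible configurations.

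\emph{The resolution} I would pursue is to place the non-compatible adjacent pair $X,Y$ inside a maximal clique (a star or a top) and find within that clique enough compatible subspaces, ortho-adjacent to both $X$ and $Y$, to pin down the images $f(X),f(Y)$. Concretely, since $X,Y$ are adjacent they share a common $(k-1)$-subspace (if in a star) or span a common $(k+1)$-subspace (if in a top); in either case I can produce compatible subspaces $Z$ ortho-adjacent to both, apply (A) to deduce $f(Z)$ is ortho-adjacent to both $f(X)$ and $f(Y)$, and then argue that these constraints, together with orthogonality preservation, leave only the possibilities that $f(X),f(Y)$ are adjacent or equal. The technical heart is verifying that enough such auxiliary $Z$ exist to exclude the case where $f(X),f(Y)$ are non-adjacent distinct subspaces; this is a counting argument using the cardinalities in Lemma \ref{lemma-comp-star-top} and the distance formula, and it is the step I expect to require the most care.
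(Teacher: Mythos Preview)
Your plan is correct and follows essentially the same route as the paper: for (B)$\Rightarrow$(A) one embeds an ortho-adjacent pair into a geodesic to an orthogonal endpoint and uses that $f$ cannot increase distance while orthogonality of the endpoints forces the image path to be a geodesic (hence compatible by Lemma~\ref{lemma-geodesic}); for (A)$\Rightarrow$(B) one produces auxiliary $X',Y'\in{\mathcal S}(X\cap Y)$ ortho-adjacent to each of $X,Y$ and to one another. The ``counting argument'' you anticipate is exactly the paper's step: a maximal compatible subset of ${\mathcal S}(X\cap Y)$ containing $X,X',Y'$ has more than $k+1$ elements (Lemma~\ref{lemma-comp-star-top} together with $\dim H>2k$), so its pairwise ortho-adjacent image lies in a star rather than a top, whence both $f(X)$ and $f(Y)$ contain the $(k-1)$-dimensional subspace $f(X')\cap f(Y')$.
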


\begin{proof}
${\rm(A)} \Longrightarrow {\rm(B)}$.
Let $X$ and $Y$ be adjacent $k$-dimensional subspaces of $H$.
Then 
$$\dim(X +Y)=k+1\;\mbox{ and }\;\dim(X+Y)^{\perp}\ge 2$$
(since $\dim H >2k > 2$). 
We take orthogonal $1$-dimensional subspaces $P,Q\subset (X+Y)^\perp$
and consider the $k$-dimensional subspaces 
$$X' =(X\cap Y)+P\;\mbox{ and }\;Y' =(X\cap Y)+Q.$$
The subspaces $X,X',Y'$  are mutually ortho-adjacent and the same holds for
the subspaces $Y,X',Y'$.
Let ${\mathcal X}$ be a maximal compatible subset of the star ${\mathcal S}(X\cap Y)$ containing  $X,X',Y'$. 
Since $f$ is ortho-adjacency preserving,  
$f({\mathcal X})$ is a compatible subset in a star or a top. 
The assumption $\dim H>2k$ together with 
Lemma \ref{lemma-comp-star-top} implies that $f({\mathcal X})$ cannot be contained in a top, 
i.e. it is a subset of a star. 
Therefore, $f(X)$ contains the $(k-1)$-dimensional subspace $f(X')\cap f(Y')$. 
The same arguments show that $f(X')\cap f(Y')$ also is contained in $f(Y)$.
This means that $f(X),f(Y)$ are adjacent or $f(X)=f(Y)$.

${\rm(B)} \Longrightarrow {\rm(A)}$.
Let $X$ and $Y$ be $k$-dimensional subspaces of $H$.
The condition (2) guarantees that $f$ transfers every geodesic of $\Gamma_k(H)$ connecting $X$ and $Y$
to a path of $\Gamma_k(H)$ connecting $f(X)$ and $f(Y)$;
in particular, the distance between $f(X)$ and $f(Y)$ in  $\Gamma_k(H)$ is not greater than 
the distance between $X$ and $Y$. 

Suppose that $X,Y$ are orthogonal. Then $f(X),f(Y)$ are orthogonal 
and both these distances are equal to $k$. 
In this case, $f$ transfers every geodesic of $\Gamma_k(H)$ connecting $X$ and $Y$
to a geodesic connecting $f(X)$ and $f(Y)$.
Lemma \ref{lemma-geodesic} implies {\rm(A)}.
\end{proof}

From this moment, we assume that one of the conditions {\rm(A)} or {\rm(B)} holds. 
Then the other also is satisfied. 

The condition (B) guarantees that $f$ sends 
maximal cliques of $\Gamma_{k}(H)$ (stars and tops) to cliques.
Using the condition (A) and Lemma \ref{lemma-comp-star-top}, we show that
for every star ${\mathcal S}\subset {\mathcal G}_k(H)$ there is a star ${\mathcal S}'\subset {\mathcal G}_k(H)$
such that $f({\mathcal S})\subset {\mathcal S}'$. 
Since the intersection of two distinct stars contains at most one element and 
every star of $\Gamma_{k}(H)$ contains ortho-adjacent elements whose images are distinct,
such a star ${\mathcal S}'\subset {\mathcal G}_k(H)$ is unique. 
Therefore, for every $(k-1)$-dimensional subspace $X\subset H$
there is a unique $(k-1)$-dimensional subspace $X'\subset H$ such 
that $f({\mathcal S}(X))$ is contained in ${\mathcal S}(X')$.
In other words, $f$ induces a transformation
$f_{k-1}$ of ${\mathcal G}_{k-1}(H)$ satisfying
$$f({\mathcal S}(X))\subset {\mathcal S}(f_{k-1}(X))$$
for every $X\in {\mathcal G}_{k-1}(H)$. 
The latter inclusion implies that 
$$f_{k-1}({\mathcal G}_{k-1}(Y))\subset {\mathcal G}_{k-1}(f(Y))$$
for every $Y\in {\mathcal G}_{k}(H)$.

\begin{lemma}\label{lemma-2.2}
The following assertions are fulfilled:
\begin{enumerate}
\item[(1)] $f_{k-1}$ is orthogonality preserving;
\item[(2)] for any adjacent $X,Y\in {\mathcal G}_{k-1}(H)$
the images $f_{k-1}(X), f_{k-1}(Y)$ are adjacent or coincident;
\item[(3)] $f_{k-1}$ is ortho-adjacency preserving.
\end{enumerate}
\end{lemma}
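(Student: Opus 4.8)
The plan is to verify the three assertions in turn, handling (1) and (2) by lifting configurations of $(k-1)$-dimensional subspaces to $k$-dimensional ones, and deducing (3) by re-running, one dimension lower, the implication ${\rm(B)}\Rightarrow{\rm(A)}$ already proved in Lemma \ref{lemma-2.1}. Throughout I will use the two relations recorded just before the lemma: $f({\mathcal S}(X))\subset {\mathcal S}(f_{k-1}(X))$ for $X\in{\mathcal G}_{k-1}(H)$, equivalently $f_{k-1}(X)\subset f(Z)$ whenever $X\subset Z$ with $\dim Z=k$, and $f_{k-1}({\mathcal G}_{k-1}(Y))\subset {\mathcal G}_{k-1}(f(Y))$ for $Y\in{\mathcal G}_k(H)$.

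For (1), let $X,Y\in{\mathcal G}_{k-1}(H)$ be orthogonal. Then $\dim(X+Y)=2k-2$, so $\dim(X+Y)^{\perp}=\dim H-(2k-2)\ge 3$ because $\dim H>2k$; in particular $(X+Y)^{\perp}$ contains a pair of orthogonal $1$-dimensional subspaces $P,Q$. I would set $\tilde X=X+P$ and $\tilde Y=Y+Q$. Since $P,Q\perp X+Y$ and $P\perp Q$, these are orthogonal $k$-dimensional subspaces, whence $f(\tilde X)\perp f(\tilde Y)$. As $\tilde X\supset X$ and $\tilde Y\supset Y$, we have $f_{k-1}(X)\subset f(\tilde X)$ and $f_{k-1}(Y)\subset f(\tilde Y)$, and therefore $f_{k-1}(X)\perp f_{k-1}(Y)$. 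For (2), if $X,Y\in{\mathcal G}_{k-1}(H)$ are adjacent, then $W=X+Y$ is $k$-dimensional and $X,Y\in{\mathcal G}_{k-1}(W)$; hence $f_{k-1}(X),f_{k-1}(Y)\in{\mathcal G}_{k-1}(f(W))$ are two $(k-1)$-dimensional subspaces of the $k$-dimensional space $f(W)$, and two such subspaces are either equal or meet in a $(k-2)$-dimensional subspace, i.e.\ are adjacent. This is exactly (2).

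The substantive point is (3). The assertions (1) and (2) state precisely that $f_{k-1}$ is an orthogonality preserving transformation of ${\mathcal G}_{k-1}(H)$ which satisfies the analogue of condition (B) of Theorem \ref{theorem-wig-gr} with $k$ replaced by $k-1$; and since $\dim H>2k>2(k-1)$, the hypotheses under which ${\rm(B)}\Rightarrow{\rm(A)}$ was obtained in Lemma \ref{lemma-2.1} hold verbatim at this lower level. I would therefore repeat that argument in $\Gamma_{k-1}(H)$. By (2), $f_{k-1}$ carries every geodesic to a path, so distances do not increase. If $X,Y$ are orthogonal, then by (1) so are $f_{k-1}(X),f_{k-1}(Y)$, and both pairs lie at distance $k-1$; consequently $f_{k-1}$ maps each geodesic joining $X$ and $Y$ onto a geodesic joining $f_{k-1}(X)$ and $f_{k-1}(Y)$. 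Given ortho-adjacent $X,Y$, Lemma \ref{lemma-geodesic}, applied in $\Gamma_{k-1}(H)$, places the compatible pair $X,Y$ on a geodesic joining $X$ to some $X'$ orthogonal to $X$; its image is a geodesic joining the orthogonal subspaces $f_{k-1}(X),f_{k-1}(X')$, hence consists of mutually compatible subspaces, so $f_{k-1}(X)$ and $f_{k-1}(Y)$ are compatible. Being consecutive distinct vertices of that geodesic, they are also adjacent, hence ortho-adjacent.

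I expect the crux to be the bookkeeping in (3) that legitimizes the reduction: one must confirm that orthogonal $(k-1)$-dimensional subspaces lie at distance exactly $k-1$ in $\Gamma_{k-1}(H)$ and that their images, orthogonal by (1), again realize this distance, so that ``geodesic maps to a path of no greater length'' is upgraded to ``geodesic maps to geodesic.'' Once this is secured, Lemma \ref{lemma-geodesic} does the remaining work and no case analysis of stars versus tops is required; in particular this avoids having to exclude directly the collapse $f_{k-1}(X)=f_{k-1}(Y)$. The only genuinely separate case is $k=2$, where the lower level is $k-1=1$: there adjacency in $\Gamma_1(H)$ is trivial and ortho-adjacency of $1$-dimensional subspaces coincides with orthogonality, so (3) is immediate from (1).
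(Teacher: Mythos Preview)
Your proof is correct and follows essentially the same approach as the paper: part (1) is identical, for (2) you use the top inclusion $f_{k-1}({\mathcal G}_{k-1}(W))\subset{\mathcal G}_{k-1}(f(W))$ while the paper uses the dual star inclusion (both recorded just before the lemma), and for (3) the paper simply writes ``follows immediately from (1) and (2),'' implicitly invoking the ${\rm(B)}\Rightarrow{\rm(A)}$ argument of Lemma~\ref{lemma-2.1}, which you have spelled out in full.
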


\begin{proof}
(1). Suppose that $X,Y$ are orthogonal $(k-1)$-dimensional subspaces of $H$.
There are orthogonal $k$-dimensional subspaces $X',Y'\subset H$
such that $X\subset X'$ and $Y\subset Y'$. 
Then 
$$f_{k-1}(X)\subset f(X')\;\mbox{ and }\;f_{k-1}(Y)\subset f(Y').$$
The subspaces $f(X'),f(Y')$ are orthogonal and the same holds for $f_{k-1}(X),f_{k-1}(Y)$.

(2).  
If $X,Y$ are adjacent $(k-1)$-dimensional subspaces of $H$,
then ${\mathcal S}(X)\cap{\mathcal S}(Y)$ consists of one element.
Since 
$$f({\mathcal S}(X)\cap{\mathcal S}(Y))\subset {\mathcal S}(f_{k-1}(X))\cap {\mathcal S}(f_{k-1}(Y)),$$
the intersection of  ${\mathcal S}(f_{k-1}(X))$ and ${\mathcal S}(f_{k-1}(Y))$ is non-empty.
The latter is possible only in the case when $f_{k-1}(X), f_{k-1}(Y)$ are adjacent or $f_{k-1}(X)=f_{k-1}(Y)$.

The assertion (3) follows immediately from (1) and (2).
\end{proof}

Recursively, we constructs a sequence of transformations
$f_{k-i}$ of ${\mathcal G}_{k-i}(H)$ with $i=0,1,\dots,k-1$
such that $f_{k}=f$ and 
$$f_{k-i+1}({\mathcal S}(X))\subset {\mathcal S}(f_{k-i}(X))$$
for every $X\in {\mathcal G}_{k-i}(H)$ and 
$$f_{k-i}({\mathcal G}_{k-i}(Y))\subset {\mathcal G}_{k-i}(f_{k-i+1}(Y))$$
for every $Y\in {\mathcal G}_{k-i+1}(H)$ if $i\ge 1$.
In particular, $f_{1}$ is a lineation of ${\mathcal P}(H)$ to itself.
The direct analogue of Lemma \ref{lemma-2.2} holds for every $f_{k-i}$ with $i<k-1$
and $f_1$ is orthogonality preserving. 
By Theorem \ref{theorem-lin}, $f_1$ is induced by a linear and conjugate-linear isometry $L:H\to H$.
Since for every $X\in {\mathcal G}_{k}(H)$ we have
$$f_{1}({\mathcal G}_{1}(X))\subset {\mathcal G}_{1}(f(X)),$$
$f$ also is induced by $L$.
This completes the proof of  Theorem \ref{theorem-wig-gr}.

\subsection*{Acknowledgment}
The authors acknowledge the support by the Austrian Science Fund (FWF): project I 4579-N and 
the Czech Science Foundation (GA\v CR): project 20-09869L.

\end{document}